\documentclass[10pt,conference]{IEEEtran}
 
\UseRawInputEncoding
\usepackage{cite}
\usepackage{amsmath}
\usepackage{amsfonts}
\usepackage{graphicx}
\usepackage{float}
\usepackage[table]{xcolor}
\usepackage{subfigure}
\usepackage{epsfig}
\usepackage{amsthm}
\usepackage{mathrsfs}
\usepackage{amsfonts}
\usepackage{amssymb}
\usepackage{mathrsfs}
\usepackage{euscript}
\usepackage{multirow}
\usepackage{url}
\usepackage{algorithm}
\usepackage{algorithmic}
\usepackage{bm}
\usepackage{cuted}
\usepackage{tcolorbox}
\usepackage[font=footnotesize]{subcaption}
\usepackage[font=footnotesize]{caption}
\usepackage{flushend}

\usepackage{hyperref}
 
\usepackage{xcolor, soul}
\newtheorem{theorem}{Theorem}

\newtheorem{corollary}{Corollary}

\sethlcolor{green}
\DeclareMathOperator{\sinc}{sinc}

\belowdisplayskip=\abovedisplayskip
 \usepackage{tikz}

\newcommand\submittedtext{
   \footnotesize This work has been submitted to the IEEE for possible publication. Copyright may be transferred
without notice, after which this version may no longer be accessible.}

 \newcommand\submittednotice{
 \begin{tikzpicture}[remember picture,overlay]
 \node[anchor=north,yshift=-8pt] at (current page.north) {\fbox{\parbox{\dimexpr0.99\textwidth-\fboxsep-\fboxrule\relax}{\submittedtext}}};
 \end{tikzpicture}
 }

\renewcommand\fbox{\fcolorbox{red}{white}}
\theoremstyle{remark}
\newtheorem*{remark*}{Remark}

\DeclareMathOperator{\E}{\mathbb{E}}                     

\begin{document}
\bstctlcite{IEEEexample:BSTcontrol}
\title{Beam Squinting Effects in Super Wideband Communication Systems
\vspace{-3mm}
}

\author{Sachitha C. Bandara$^*$, Peter J. Smith$^\dagger$, Erfan Khordad$^*$, Pawel Dmochowski$^{\dagger\dagger}$, Robin Evans$^*$,  Rajitha Senanayake$^*$\\
$^*$ Department of Electrical and Electronic Engineering, University of Melbourne, Melbourne, Australia\\
$^\dagger$ School of Mathematics and Statistics, Victoria University of Wellington, Wellington, New Zealand\\
$^{\dagger\dagger}$ School of Engineering and Computer Science, Victoria University of Wellington, Wellington, New Zealand
\vspace{-5mm} }

\maketitle
\submittednotice
\begin{abstract}
Beam squint, the frequency-dependent shift of the main beam, poses a major challenge for wideband antenna arrays. 
This paper focuses on the beam squint effects in super wideband (SW) systems, where high mutual coupling (MC) effects are present.
These high MC effects complicate beamforming (BF) by creating frequency-dependent phase relationships that invalidate conventional approaches.
To accurately model MC effects, this paper uses a circuit-theoretic framework for tightly coupled SW uniform linear arrays (ULAs). 
We derive closed-form expressions for the average received signal-to-noise ratio (SNR) with BF in conventional half-wavelength spaced, weakly coupled arrays and validate them. 
Extending our analysis to tightly coupled SW arrays, we demonstrate that, in contrast to conventional weakly coupled arrays, the effective true time delays exhibit a nonlinear dependence on frequency due to coupling-induced phase shifts.
A comparative analysis reveals that strong MC in SW arrays significantly reduces squint in phase-controlled BF, extending the usable bandwidth considerably. 
\end{abstract}

\begin{IEEEkeywords}
 Mutual Coupling, Beamforming, Beam Squinting, True Time Delay
\end{IEEEkeywords}
 
 \vspace{-2mm}
\section{Introduction}\label{Sec:Introduction}
Next-generation wireless systems demand higher spatial degrees of freedom and wider bandwidths to meet escalating data requirements\cite{Huo2023}. This drives antenna array designs toward more compact configurations that pack elements in increasingly smaller volumes\cite{mimo_emil}. However, reducing inter-element spacing inevitably increases mutual coupling (MC) between array elements. Traditionally, MC has been viewed as a detrimental effect that degrades array performance~\cite{mutual_coupling_2018,warnick2009}. Recent studies challenge this perspective by demonstrating that intentionally introducing strong MC can yield significant benefits. Tightly coupled super wideband (SW) arrays exploit high MC to achieve massive bandwidth expansions~\cite{super_wideband,connected_arrays,munk2006connected}, substantial endfire gains~\cite{Bandara2025}, reduced channel correlations~\cite{Bandara2025,sachitha2024wcnc}, and new opportunities for multiuser communications~\cite{Bandara2025}. 

High MC in tightly coupled SW arrays introduces complexities for signal processing.
Specifically, the large bandwidth expansion of these arrays makes them particularly susceptible to \textit{beam squint}.
Conventional phase-controlled (PC) beamforming (BF) uses phase shifters calibrated at a single frequency and applied across the entire band with a single radio-frequency (RF) chain \cite{alekseev2017phase}. Since these phase shifts do not adjust to the frequency, the intended beam direction is maintained only at the calibration frequency, causing the beam to steer off-target at other frequencies~\cite{cai2016}. This mismatch, known as beam squint, degrades system performance~\cite{garakoui2011phase}. Time delay (TD) beamformers correct squint by providing frequency-dependent delays \cite{longbrake2012ttd, koch_ttd}. However, TD implementations incur higher hardware complexity~\cite{cai2016,garakoui2011phase}. 
Thus, PC BF remains the preferred solution in many practical systems, offering lower hardware complexity and cost despite its inherent squint limitation.

In tightly coupled arrays, analyzing the effects of MC on beam squinting becomes challenging due to the complex electromagnetic relationships induced by MC. This necessitates a unified framework for communications and electromagnetics, such as circuit theory \cite{Ivrlac_main, damico2023holographic}. 
The existing work on MC effects on beam squinting mainly focuses on suppressing MC \cite{chou2023mc_reduction, abdalrazak2024mc_reduction}, rather than exploiting its potential benefits. Furthermore, TD BF in tightly coupled systems presents additional complexity, as MC between antenna elements can alter effective TDs and introduce residual phase errors that compromise squint cancellation. 
To the best of our knowledge, no prior work has comprehensively analyzed the interplay between MC and beam squint in SW arrays. Furthermore, the practical feasibility of TD solutions for mitigating squint in these tightly coupled systems remains unexplored.
To this end, the insights we draw from our study are critical for next-generation wireless systems, where SW arrays are emerging as one of the key enablers for millimeter-wave and terahertz communications.
The main contributions of this paper are as follows:
\begin{itemize}
    \item We analyze beam squinting effects using a rigorous circuit-theory framework that accurately models MC interactions in SW systems.
    \item We derive circuit-theoretic closed-form expressions for the average received signal-to-noise ratio (SNR) for conventional, weakly coupled half-wavelength spaced uniform linear arrays (ULAs) under PC BF. 
    \item We demonstrate that SW systems cannot achieve complete squint elimination with a single set of time delay units due to frequency-dependent coupling relationships.
    \item We show that tightly coupled SW arrays with phase-controlled BF experience significantly reduced beam squint compared to weakly coupled arrays under similar conditions.
\end{itemize}

\section{System Model}\label{Sec:systemModel}
Let us consider a line-of-sight (LoS) environment where a ULA with $N$ antennas receives signals from a single antenna transmitter.
As our focus is on SW systems with tightly coupled elements, we adopt the circuit-theoretic framework in \cite{Ivrlac_main}, which provides a physically-consistent representation of MC effects. Using input-output voltage relationships, we write the resultant system equation as,
\begin{align}
    \bm{v}_{L}(f) &= \bm{h}(f){v}_{G}(f) + \bm{n}(f),
    \label{eqn:system_eqn}
\end{align}
where $\bm{v}_{L}(f) \in \mathbb{C}^{N\times 1}$ and ${v}_{G}(f)\in \mathbb{C}$ are the frequency domain representations of load (output) and source or generator (input) voltage vectors, respectively. The vector, $\bm{h}(f) \in \mathbb{C}^{N\times 1}$, denotes the equivalent Single-Input-Multiple-Output (SIMO) channel, and $\bm{n}(f) \in \mathbb{C}^{N\times 1}$ represents the noise voltage vector at the receiver. We use frequency-domain representations throughout this paper as they provide insights into how the system behaves at different frequencies across the wide bandwidth. In this setup, the SIMO channel vector can be modelled using the self and mutual impedances of the antenna arrays at both ends and the parameters of other linearly connected devices, such as low noise amplifiers (LNAs) \cite{super_wideband}. For an end-to-end SIMO system with an LNA structure as given in \cite[Fig.~1]{Bandara2025}, and based on  \cite{super_wideband, Ivrlac_main, Bandara2025, sachitha2024wcnc}
we can write the equation for $\bm{h}(f)$ as\footnote{Here, we do not present the full circuit theoretic modelling due to space limitations. For a rigorous, detailed circuit theoretic framework and channel modelling, refer to \cite{Bandara2025, sachitha2024wcnc}.},
\begin{subequations}
\begin{align}
    \bm{h}(f) &= \gamma (f) \mathbf{P}(f) \bm{a}(f),
    \end{align}
    \noindent where,
    \begin{align}
    \gamma (f) &= \rho Z_{\text{LNA}}\sqrt{\Re{\{Z_{T}(f)\}}\Re{\{Z_{R}(f)\}}\beta(f)}{Q}(f)e^{j\psi},\\
    \mathbf{P}(f) &= \left(\mathbf{Z}_{R}(f) + Z_{\text{LNA}}\mathbf{I}_{N}\right)^{-1}.
    \end{align}
    \label{eqn:matrix_relationships_simo} 
\end{subequations}
\par\noindent In (\ref{eqn:matrix_relationships_simo}), $\mathbf{P}(f) \in \mathbb{C}^{N \times N}$ is the receiver coupling matrix, $\mathbf{I}_N$ is an $N \times N$ identity matrix, and $ {Q}(f) = 1/(Z_T(f)+Z_G)$. Note that the above equation is written for far-field (FF) communications. ${Z}_{T}(f)$ is the impedance of the transmitting antenna and $\mathbf{Z}_{R}(f) \in \mathbb{C}^{N\times N}$ is the receiver array impedance matrix, whose diagonal elements are equal to $Z_R(f)$ which represents the self impedance of an antenna.
The off-diagonal elements of $\mathbf{Z}_{R}(f)$ denote mutual impedances between elements.
$Z_G$ denotes impedance of the source, $\rho$ and $Z_{\text{LNA}}$ represent the gain and the impedance of an LNA, \(\psi\) represents a frequency-dependent circuit-theoretic parameter specific to the equivalent RLC circuit of an antenna element, which is given in \cite{super_wideband}, 
$\bm{a}(f)$ represents the steering vector of the LoS channel, which is given by, 
\begin{align}\label{Eq:SteeringVec1}
    \bm{a}(f) &= [1,e^{j2\pi f \frac{\delta}{c}\sin(\phi) },...,e^{j2\pi f \frac{\delta}{c} (N-1)\sin(\phi) }]^T,
\end{align}
with $\phi$ denoting the angle of arrival (AoA) at the receiver ULA 
with respect to the broadside, $\delta$ denoting the inter-element seperation, and $c$ denoting the speed of light. $\beta(f)$ is the channel gain given by, $\beta(f) =  G_{T}G_{R}\left(c/(2\pi fd^{\eta/2})\right)^2$,
where $G_T$ and $G_R$ represent the transmitter and receiver antenna gains, respectively, $d$ denotes distance between the transmitter and the receiver, and $\eta$ represents the path loss exponent.

Calculation of the diagonal elements of $\mathbf{Z}_{R}(f)$ is straightforward and can be done using the equivalent electrical circuits of the corresponding antennas used in the array. For the off-diagonal elements, we adopt a closed-form formula for mutual impedance between two small antennas under the \textit{Chu limit}  which was derived in \cite{shyianov2022, super_wideband}.

Next, we model the noise using the circuit-theoretic framework, which accounts for physically consistent noise sources \cite{Ivrlac_main}, rather than assuming white Gaussian noise. The received noise vector in (\ref{eqn:system_eqn}) for FF communications is given by~\cite{super_wideband},
\begin{align}
    \bm{n}(f) = \bm{v}_{\text{N,LNA}}(f) + \beta Z_{\text{LNA}}\mathbf{P}(f){\bm{v}}_{N,R}(f),
    \label{eqn:noise_equation}
\end{align}
where $\bm{v}_{\text{N,LNA}}(f)\!\in\!\mathbb{C}^{N\times1}$ denotes the LNA noise voltage vector and $\bm{v}_{N,R}(f)\!\in\!\mathbb{C}^{N\times1}$ is the receiver antenna thermal noise.
The corresponding noise covariance matrix is \cite{sachitha2024wcnc},
\begin{align}
    \mathbf{R}_{n}(f)\! &=\! \E\left[\bm{n}(f)\bm{n}^H\!(f)\right]\!\!=\! 4k_{b}T\Delta f \! \left[\Re{\{{Z}_{\text{LNA}}\!\}}(N_{f}\!-\!1)\mathbf{I}_{N} \right. \nonumber \\ 
    & \qquad \left.+ \quad\rho^2{Z}_{\text{LNA}}^2\mathbf{P}(f)\Re\{\mathbf{Z}_R(f)\}\mathbf{P}^{H}(f)\right],
    \label{eqn:noise_corr}
\end{align}
where $k_b$ is the Boltzmann constant, $T$ is the absolute room temperature, $\Delta f$ is the bandwidth, and $\Re{\{{Z}_{\text{LNA}}\}}(N_{f} - 1)$ is the equivalent thermal noise resistance of the LNA.  
$\bm{x}^H$ is the hermitian transpose of $\bm{x}$. 
For simplicity, noise correlations among LNAs due to coupling \cite{warnick2009} are ignored. Next, we will look at the effects of MC in wideband BF.

\section{Analysis of the Beam Squinting Effects}\label{sec:beam_squinting_analysis}

Let us first consider the traditional PC analog BF at the receiver. Based on (\ref{Eq:SteeringVec1}), the conventional analog BF vector, which is designed at the centre frequency, $f_c$, is given by,
\begin{align}
    \bm{w}_{\text{CONV}}(f_c) &= \bm{a}(f_c).
    \label{expr:w_conv}
\end{align}
Here, and in all following sections, we align the beamformer with the signal's AoA. We give an SNR expression that includes the impacts of both coupling and beam squint.
Based on (\ref{eqn:matrix_relationships_simo}) and (\ref{expr:w_conv}), the received SNR of PC analog BF at any given frequency, $f$, can be expressed as,
\begin{align}
    \text{SNR}_{\text{CONV}}(f) &= \frac{|\gamma(f)|^2 |\bm{a}^H(f_c)\mathbf{P}(f)\bm{a}(f)|^2 P_T}{\bm{a}^H(f_c)\mathbf{R}_n(f)\bm{a}(f_c)},
    \label{eqn:conventional_analog_bf}
\end{align}
where $P_T$ is the transmit signal power and we consider equal power allocation for each transmission frequency. In \eqref{eqn:conventional_analog_bf}, we need to distinguish between the frequency of the transmit signal $f$ and the fixed frequency  $f_c$ for which the analog beamformer is designed. The mismatch between $f$ and $f_c$ leads to beam squint.

\subsection{Beam squinting effects of conventional weakly coupled arrays}\label{subsec:squinting_in_conventionak_sys}
Let us first consider (\ref{eqn:conventional_analog_bf}) for a conventional ULA that is designed with the well-known half-wavelength spacing. These arrays are designed to mitigate MC effects. Hence, the coupling and noise correlation effects are negligible, and we can model them as weakly coupled ULAs with coupling and noise correlation matrices given as,
\begin{align}\label{Eq:Sigma2Csigma2nForULA}
    \mathbf{P}(f) &\approx \sigma^2_c(f)\mathbf{I}_N, \nonumber \\
    \mathbf{R}_n(f) &\approx \sigma^2_n(f)\mathbf{I}_N,
\end{align}
respectively, where $\sigma^2_c(f)$ denotes the value of a diagonal element in $\mathbf{P}(f)$ and $\sigma^2_n(f)$ denotes the value of a diagonal element in $\mathbf{R}_n(f)$. Unlike a tightly coupled SW array, a conventional weakly coupled ULA does not provide a massive operational bandwidth; thus, it is used for a relatively smaller bandwidth. As such, we assume that the terms $\gamma(f)$ in \eqref{eqn:conventional_analog_bf} as well as $\sigma^2_c(f)$ and $\sigma^2_n(f)$ in \eqref{Eq:Sigma2Csigma2nForULA} remain approximately constant over the bandwidth.
Hence, for weakly coupled arrays, the expression in (\ref{eqn:conventional_analog_bf}) can be re-expressed as, 
\begin{align}
    \text{SNR}_{\text{CONV}}^{\text{WC}}(f) &= \frac{|\gamma|^2 \sigma^2_c|\bm{a}^H(f_c)\bm{a}(f)|^2 P_T}{\sigma^2_n\bm{a}^H(f_c)\bm{a}(f_c)},
    \label{eqn:conventional_analog_bf_coupling_free}
\end{align}
where the superscript WC denotes a weakly coupled array. 
This can be further simplified to obtain the closed-form expression given by,
\begin{align}
    \text{SNR}_{\text{CONV}}^{\text{WC}}(f) 
    &= \frac{|\gamma|^2 \sigma^2_c P_T }{N \sigma^2_n} 
       \left|\sum_{i=1}^{N} e^{j2\pi (i-1) \frac{\delta}{c}\sin({\phi})(f-f_c)}\right|^2 \nonumber \\
    &= \frac{|\gamma|^2 \sigma^2_c P_T }{N \sigma^2_n} 
       \frac{\sin^2\!\left(\tfrac{N\theta (f)}{2}\right)}{\sin^2\!\left(\tfrac{\theta (f)}{2}\right)},
    \label{eqn:oupling_free_instantaneous}
\end{align}
where $\theta (f) = 2\pi\frac{\delta}{c}(f-f_c)\sin(\phi)$.
Note that this expression includes the beam squinting effect due to the difference between $f$ and $f_c$.

The beam squinting effect can be eliminated by employing true time delay (TTD) beamformers, where each element is equipped with a TD unit. For a weakly coupled ULA, this is achieved by setting the TTD beamforming vector as,
\begin{align}
    \bm{w}_{\text{TTD}}(f) &= \left[1,e^{j2\pi f\Delta t},...,e^{j2\pi f (N-1)\Delta t}\right]^T,
    \label{eqn:TTD_beamformer_geo}
\end{align}
where $\Delta t = \frac{\delta}{c} \sin(\phi)$ is the geometrical time delay, which is uniquely defined by the geometrical properties of the array. Note that the main difference between the TTD beamformer in \eqref{eqn:TTD_beamformer_geo} and the PC beamformer in \eqref{expr:w_conv} is the fixed frequency $f_c$ of the PC beamformer. 
The received instantaneous SNR using a TTD beamformer can be written as,
\begin{align}
    \text{SNR}_{\text{TD}}^{\text{WC}}(f) &=  \frac{|\gamma|^2 \sigma^2_c P_T N}{ \sigma^2_n},
    \label{eqn:oupling_free_ttd}
\end{align}
which eliminates the beam squint effect, because the TTD beamformer adapts to the frequency of the transmit signal.

Next, we consider the average SNR over a bandwidth $\Delta f$, which can be computed using,
\begin{align}
    \overline{\text{SNR}}(\Delta f) &=  \frac{1}{\Delta f}\int_{\Delta f} \text{SNR}(f) df.
    \label{eqn:avg_snr_def}
\end{align}
Here, we do not specify any subscripts or superscripts, as \eqref{eqn:avg_snr_def} is applicable to any BF strategy and any type of array. 
Later, we will use the average SNR to quantify the loss due to beam squinting.
Using this definition, the average received SNR of a weakly coupled receiver array under PC analog BF is stated in the following theorem.

\begin{theorem}
For a weakly coupled ULA with $N$ elements, the average received SNR over a bandwidth $\Delta f$  with PC analog BF is given by,
\begin{align}
\overline{\text{SNR}}_{\text{CONV}}^{\text{WC}}(\Delta f)
&= \frac{|\gamma|^2 \sigma^2_c P_T }{\sigma^2_n} 
\left[
1 + 2\sum_{m=1}^{N-1}\left(1-\frac{m}{N}\right) \right. \nonumber \\
&\qquad\left. \times \sinc\left(\pi\frac{\delta}{c} m\Delta f \sin(\phi)\right)
\right].
\label{eqn:thm1}
\end{align}
\end{theorem}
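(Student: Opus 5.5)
Starting from the closed form \eqref{eqn:oupling_free_instantaneous}, I will expand the squared modulus of the geometric sum as a double sum instead of using the Dirichlet-kernel ratio. Writing $\theta(f)=2\pi\frac{\delta}{c}(f-f_c)\sin(\phi)$, we have
\begin{align}
\left|\sum_{i=1}^{N} e^{j(i-1)\theta(f)}\right|^2=\sum_{i=1}^{N}\sum_{k=1}^{N} e^{j(i-k)\theta(f)} = N + 2\sum_{m=1}^{N-1}(N-m)\cos\big(m\theta(f)\big). \nonumber
\end{align}
This uses the fact that exactly $N-m$ index pairs have $i-k=m$, and that the terms with lags $\pm m$ pair up into cosines. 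Dividing by $N$ gives the prefactor $|\gamma|^2\sigma_c^2P_T/\sigma_n^2$ multiplying $1+2\sum_{m}(1-m/N)\cos(m\theta(f))$. This already has the structure of \eqref{eqn:thm1}.

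Next I will apply the averaging definition \eqref{eqn:avg_snr_def}. I take the band to be centred at the design frequency, $f\in[f_c-\Delta f/2,\,f_c+\Delta f/2]$, since the beamformer \eqref{expr:w_conv} is designed at the centre frequency. Under the weak-coupling assumption, $\gamma$, $\sigma_c^2$ and $\sigma_n^2$ are constant, so they come out of the integral. The constant term averages to $1$. Interchanging the finite sum and the integral, each cosine term gives
\begin{align}
\frac{1}{\Delta f}\int_{f_c-\Delta f/2}^{f_c+\Delta f/2}\cos\!\Big(2\pi m\tfrac{\delta}{c}\sin(\phi)(f-f_c)\Big)df=\frac{\sin\big(\pi m\frac{\delta}{c}\Delta f\sin(\phi)\big)}{\pi m\frac{\delta}{c}\Delta f\sin(\phi)}. \nonumber
\end{align}
This is $\sinc(\pi\frac{\delta}{c}m\Delta f\sin(\phi))$ with the unnormalized convention $\sinc(x)=\sin(x)/x$. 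Substituting back yields \eqref{eqn:thm1}.

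The computation is routine. The only points needing care are the following.
\begin{itemize}
\item The symmetric centring of the band about $f_c$. This is what makes the odd (sine) parts vanish and produces a real sinc with no phase factor. If the band were not centred, an extra $\cos$ of the offset would appear.
\item The $\sinc$ convention implied by the argument $\pi(\cdot)$.
\item The degenerate case $\phi=0$. There every sinc equals $1$, the bracket equals $N$, and the result is consistent with the TTD value \eqref{eqn:oupling_free_ttd}. This is a useful sanity check.
\end{itemize}
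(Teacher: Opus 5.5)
Your proposal is correct and matches the paper's Appendix~A proof: expand the squared geometric sum as a double sum, integrate over the band $[f_c-\Delta f/2,\,f_c+\Delta f/2]$, and collect terms by lag $m$ to get the weights $2(N-m)$. The only difference is the order of steps, and it is cosmetic: you pair the $\pm m$ lags into cosines before integrating, whereas the paper integrates each complex exponential to a $\sinc$ first and then uses the evenness of $\sinc$ to collect terms.
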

\noindent\textit{Proof:} See Appendix~A.
\begin{corollary}\label{Coroll:One}
 For a small bandwidth, $ \epsilon$, the SNR expression in \eqref{eqn:thm1} can be written as,
 \begin{align}
\overline{\text{SNR}}_{\text{CONV}}^{\text{WC}}(\epsilon)
&\approx \frac{N|\gamma|^2 \sigma^2_c P_T }{\sigma^2_n} 
\Bigg[
1 - \frac{1}{36} \left(\frac{\pi \delta \epsilon \sin(\phi)}{c}\right)^2  \nonumber
\\ &\qquad
\times (N-1)(N+1)
\Bigg].
\label{cor:1}
\end{align}
\end{corollary}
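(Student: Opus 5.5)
The plan is to obtain the corollary directly from the closed form in Theorem~1 by a second-order Taylor expansion of the $\sinc$ terms in the bandwidth $\epsilon$, followed by evaluation of the resulting finite power sums. Throughout, $\sinc(x)=\sin(x)/x$, which is the convention under which \eqref{eqn:thm1} reduces correctly at zero bandwidth.

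First, I will set $a=\pi\delta\epsilon\sin(\phi)/c$, so that the argument of the $m$-th $\sinc$ term in \eqref{eqn:thm1} is $am$. For small $\epsilon$, I will use
\begin{align*}
\sinc(am)=1-\frac{a^2m^2}{6}+O(a^4),
\end{align*}
and drop all terms of order $a^4$ and higher. Inserting this into the bracket of \eqref{eqn:thm1} separates it into a zeroth-order part and an $a^2$ part:
\begin{align*}
1+2\sum_{m=1}^{N-1}\left(1-\frac{m}{N}\right)-\frac{a^2}{3}\sum_{m=1}^{N-1}\left(m^2-\frac{m^3}{N}\right).
\end{align*}

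Second, I will check the zeroth-order part. Since $\sum_{m=1}^{N-1}(1-m/N)=(N-1)/2$, it equals $1+(N-1)=N$. This recovers the TTD value \eqref{eqn:oupling_free_ttd} as $\epsilon\to 0$, and it is the factor $N$ that appears in front of the bracket in \eqref{cor:1}.

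Third, I will evaluate the $a^2$ coefficient using the standard sums $\sum_{m=1}^{N-1}m^2=(N-1)N(2N-1)/6$ and $\sum_{m=1}^{N-1}m^3=N^2(N-1)^2/4$. These give
\begin{align*}
\sum_{m=1}^{N-1}\left(m^2-\frac{m^3}{N}\right)=N(N-1)\left[\frac{2N-1}{6}-\frac{N-1}{4}\right]=\frac{N(N-1)(N+1)}{12}.
\end{align*}
Multiplying by $a^2/3$, the bracket becomes $N-\frac{a^2}{36}N(N-1)(N+1)$. Factoring out $N$ then yields exactly \eqref{cor:1}.

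The only delicate point is this polynomial simplification of the power sums; everything else is routine. I would also remark that the approximation is valid when $a(N-1)\ll 1$, since the largest $\sinc$ argument in the sum is $a(N-1)$. This condition is what makes the neglected $O(a^4)$ terms, which scale like $a^4N^5$ relative to the bracket, negligible.
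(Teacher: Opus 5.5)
Your proposal is correct and follows the same route as the paper, which simply says to replace each $\sinc$ term in \eqref{eqn:thm1} by its leading nontrivial Taylor term $1-x^2/6$ (the paper calls this the ``first-order'' expansion). Your evaluation $\sum_{m=1}^{N-1}(m^2-m^3/N)=N(N-1)(N+1)/12$, together with the zeroth-order check that the bracket equals $N$, supplies exactly the algebra the paper leaves implicit.
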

\noindent To derive (\ref{cor:1}), use the first-order Taylor expansion of the $\sinc(\cdot)$ function in \eqref{eqn:thm1}. 
The resulting expression reveals a quadratic SNR roll-off with bandwidth and provides an accurate approximation for small bandwidths, as validated in Section \ref{Sec:Simulations}.
From (\ref{cor:1}), it can be shown that when the bandwidth reaches zero, the received SNR of a PC beamformer approaches to that of a TTD beamformer as given in (\ref{eqn:oupling_free_ttd}). 

\subsection{Beam squinting effects of tightly coupled SW arrays}\label{ssec:pc_sw}

Next, we discuss the beam squint effect for tightly coupled arrays with non-negligible MC. We divide the discussion into two parts, considering the two BF implementations.
\subsubsection{Phase-controlled BF}
Compared to conventional ULAs where the elements are spaced to avoid MC, SW arrays are made by intentionally increasing MC such that the ULA is tightly coupled. As a result, the receiver steering vector of these arrays is distorted via coupling{\cite{sachitha2024wcnc,Bandara2025}, and the received noise vector in (\ref{eqn:noise_equation}) is correlated due to coupling. 
Hence, the conventional PC analog BF, which relies on the idealized steering vector in (\ref{Eq:SteeringVec1}), is inadequate for SW arrays\cite{Bandara2025}.

For tightly coupled arrays, an alternative PC BF implementation is Phase-Only-Processing (POP) BF. POP BF uses the phase information of the optimal digital matched filter processing, which maximizes the SNR in the analog domain with a single RF chain\cite{Bandara2025}.
The POP BF vector is designed for the center frequency, $f_c$, and is given as \cite{Bandara2025}, 
\begin{align}
    \bm{w}_{\text{POP}}^H(f_c) &= \exp{[j\angle{\gamma^* (f_c) \bm{a} ^H (f_c) \mathbf{P}^{H}(f_c) \mathbf{R}_n^{-1}(f_c)}]},
\end{align}
where $\exp{[j\angle{\bm{x}}]}$ is defined for a generic vector $\bm{x} = [x_1,...,x_N]^T$ as,
\begin{align*}
    \exp{[j\angle{\bm{x}}]} &= \left[e^{j\angle{x_1}}, e^{j\angle{x_1}},...,e^{j\angle{x_N}}\right]^T.  
\end{align*}
Consequently, the instantaneous received SNR for a transmit signal with frequency $f$, using POP BF, is given by
\begin{align}
    \text{SNR}_{\text{POP}}^\text{TC}(f) =  \frac{\left|\bm{w}_{\text{POP}}^H(f_c) \gamma (f) \mathbf{P}(f)\bm{a} (f)\right|^2 P_T}{\bm{w}_{\text{POP}}^H(f_c) \mathbf{R}_n (f)\bm{w}_{\text{POP}}(f_c)},
     \label{eqn:snr_pop}
\end{align}
where the superscript TC represents a tightly coupled array. Note that the expression in (\ref{eqn:snr_pop}) includes the beam squinting effect, as explained for \eqref{eqn:conventional_analog_bf}. For a tightly coupled array, both $\mathbf{P}(f)$ and $\mathbf{R}_n(f)$ exhibit significant non-diagonal entries, reflecting strong element-to-element interactions. Specifically, $\mathbf{P}(f)$ corresponds to the inverse of an impedance matrix, where the non-diagonal elements are defined in \cite[Eqn.~5]{super_wideband}, while $\mathbf{R}_n(f)$ is obtained as in~(\ref{eqn:noise_corr}). Owing to these coupling-induced complexities, it is intractable to derive a simplified formula for the instantaneous received SNR or to obtain a closed-form expression for the average SNR, as done for weakly coupled arrays. 
Thus, we evaluate these metrics using the vector expression in (\ref{eqn:snr_pop}). 

\subsubsection{Time delay BF}\label{ssec:tc_bf_sw}

As we discussed in Section \ref{subsec:squinting_in_conventionak_sys}, for conventional weakly coupled arrays with ideal steering vectors, TD BF with geometry-based frequency-independent delays fully eliminates beam squint. 
However, MC in tightly coupled arrays fundamentally transforms this into a frequency-dependent problem. 
The MC induced distortions cause the phase response of the steering vector to become a nonlinear function of frequency, requiring different TDs at each frequency to compensate for beam squint perfectly.
Consequently, any TD beamformer that perfectly eliminates beam squint in tightly coupled arrays must inherently use frequency-dependent delays.
One such TD beamformer can be defined based on the phase information of the optimal digital matched filter.
Let us consider the received signal vector in (\ref{eqn:system_eqn}). 
\begin{align}
    \bm{v}_L(f) &=  \widetilde{\bm{a}}(f){v}_G(f) + \bm{n}(f),
    \label{eqn:received_vec_sw}
\end{align}
where $\widetilde{\bm{a}}(f) = \gamma(f) \mathbf{P}(f) \bm{a}(f)$ is the distorted steering vector. The optimal beamformer that maximizes SNR while eliminating beam squint is,
\begin{align}
\bm{w}_{\text{opt}}(f) &= \mathbf{R}_n^{-1}(f)\widetilde{\bm{a}}(f).
\end{align}
We can extract the frequency-dependent TDs from the phase response of the optimal beamformer weights.
Let us consider a generic TD BF vector as, 
\begin{align}
\bm{w}_\text{TD}(f) &= \left[e^{j2\pi f \Delta t_1}, e^{j2\pi f \Delta t_2}, \ldots, e^{j2\pi f \Delta t_N}\right]^T,
\label{eqn:ttd_beamformerWeights}
\end{align}
where $\Delta t_k$ is the time delay of $k^{\text{th}}$ element. Hence, to fully eliminate beam squint, $\Delta t_k(f)$, can be set as,
\begin{align}
\Delta t_k(f) &= \frac{\angle{[\mathbf{R}_n^{-1}(f)\widetilde{\bm{a}}(f)}]_k}{2\pi f},
\label{eqn:td_opt}
\end{align}
where $[\bm{x}]_k$ denotes the $k^{\text{th}}$ element of $\bm{x}$.
The frequency dependence of these TDs for tightly coupled arrays complicates the design of a TD beamformer to fully eliminate beam squint, as each array element must be assigned a distinct delay that varies non-linearly with frequency.
This makes it infeasible to realize a single set of frequency-independent TDs that perfectly eliminate squinting. 
A single set of delays that is reasonable over the whole band can be found via numerical optimization methods. 
However, this is left for further work, and we focus on closed-form approaches as described below.

TD-I). \textit{Geometrical TD}: Here we set the geometry-based TD for each element, similarly to a TTD beamformer for weakly coupled arrays. Thus $\Delta t_k$ is given by,
\begin{align}
    \Delta t_k &=\frac{\delta}{c} (k-1) \sin(\phi).
\end{align}

TD-II). \textit{TD at the center frequency}: Instead of implementing TDs at each frequency as given in (\ref{eqn:td_opt}), which complicates designing the TD beamformers, we use a set of TDs at the center frequency for processing the signal at any given frequency. As such, $\Delta t_k$ can be set as,
\begin{align}
    \Delta t_k &=\frac{\angle{[\mathbf{R}_n^{-1}(f_c)\widetilde{\bm{a}}(f_c)}]_k}{2\pi f_c}.
\end{align}
In Section \ref{Sec:Simulations}, we show the effectiveness of our approaches TD-I and TD-II. To obtain squint-free performance as a baseline, we consider TDs as in (\ref{eqn:td_opt}). This provides a way of evaluating TD-I, TD-II, and the loss due to beam squinting in the numerical results.

\section{Numerical Results}\label{Sec:Simulations}
In this section, we validate the derived closed-form expressions for the average received SNR of conventional ULAs, show the effects of high MC on PC and TD BF, and compare the loss due to beam squinting of different BF strategies.

\subsection{Simulation setup}
For conventional weakly coupled ULAs with half-wavelength spacing, the parameters $\gamma$, $\sigma_c^2$, $\sigma_n^2$, and $P_T$ only scale the received SNR and are therefore set to unity.
The array operates at a center frequency of $f_c$=10 GHz. 

For tightly coupled SW arrays, we consider co-linear ULAs composed of canonical minimum scattering (CMS) antennas that satisfy the Chu limit and enable wideband operation \cite{chu1948,kahn1965cms, super_wideband}. 
Unlike the weakly coupled case, coupling and noise correlations over the wide bandwidth are no longer negligible and must be included in the circuit-theoretic model.
The source impedance $Z_G$ and LNA impedance $Z_{\text{LNA}}$ are set to $1\Omega$, while the LNA gain $\rho$ and noise factor $N_f$ are $10$ and $5$ dB, respectively~\cite{super_wideband}.
We compute the mutual impedances between the antenna elements using the closed-form expression in \cite{super_wideband} for small CMS antennas. The inter-element separation $\delta$ is fixed at $0.5$ cm. 
The coupling is controlled by the factor given by $\delta/a_R$, where $a_R$ is the radius of the sphere enclosing an antenna in the receiver array. 
Unless otherwise specified, tight coupling denotes an array with the optimal coupling factor as defined in \cite{super_wideband}, which provides the maximum bandwidth gains.
The radius of the transmitting antenna is set to $100a_R$ to avoid bandwidth limits\cite{super_wideband}. 
The distance between the transmitter and the receiver is fixed at $90$m for all cases, and the path loss exponent, $\eta$, is set to 3.5. Also, the antenna gains, $G_T$, $G_R$ are both fixed at $1.5$\cite{super_wideband}. For all scenarios, we consider an array with $32$ elements. Unless otherwise specified, both AoA and BF angles are set as $\pi/3$ with respect to broadside.

\subsection{Beam squinting effects of conventional ULAs}
\begin{figure}[t] 
    \centering
    \begin{minipage}{0.45\textwidth} 
        \centering
        \includegraphics[width=\textwidth]{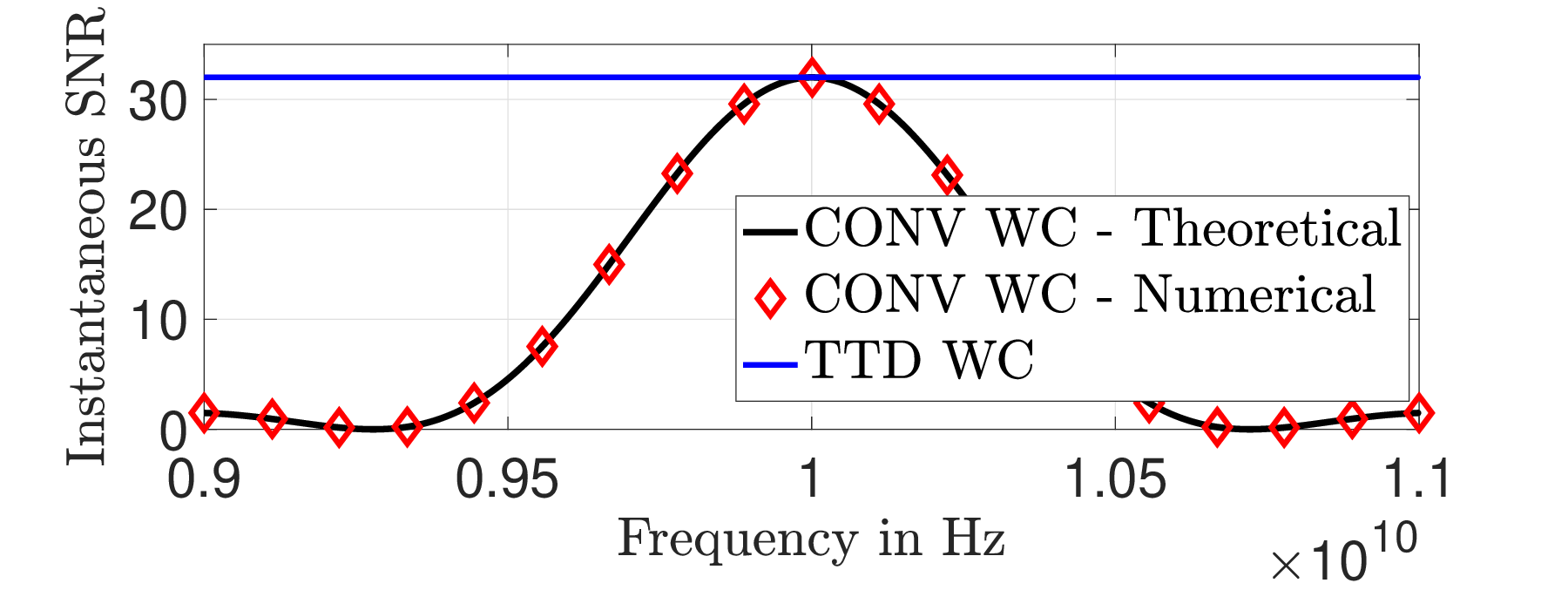}
        \caption*{(a) Instantaneous received SNR vs frequency of weakly coupled arrays}
        \label{fig:instant_valid}
    \end{minipage}
    
    \begin{minipage}{0.45\textwidth} 
        \centering
        \includegraphics[width=\textwidth]{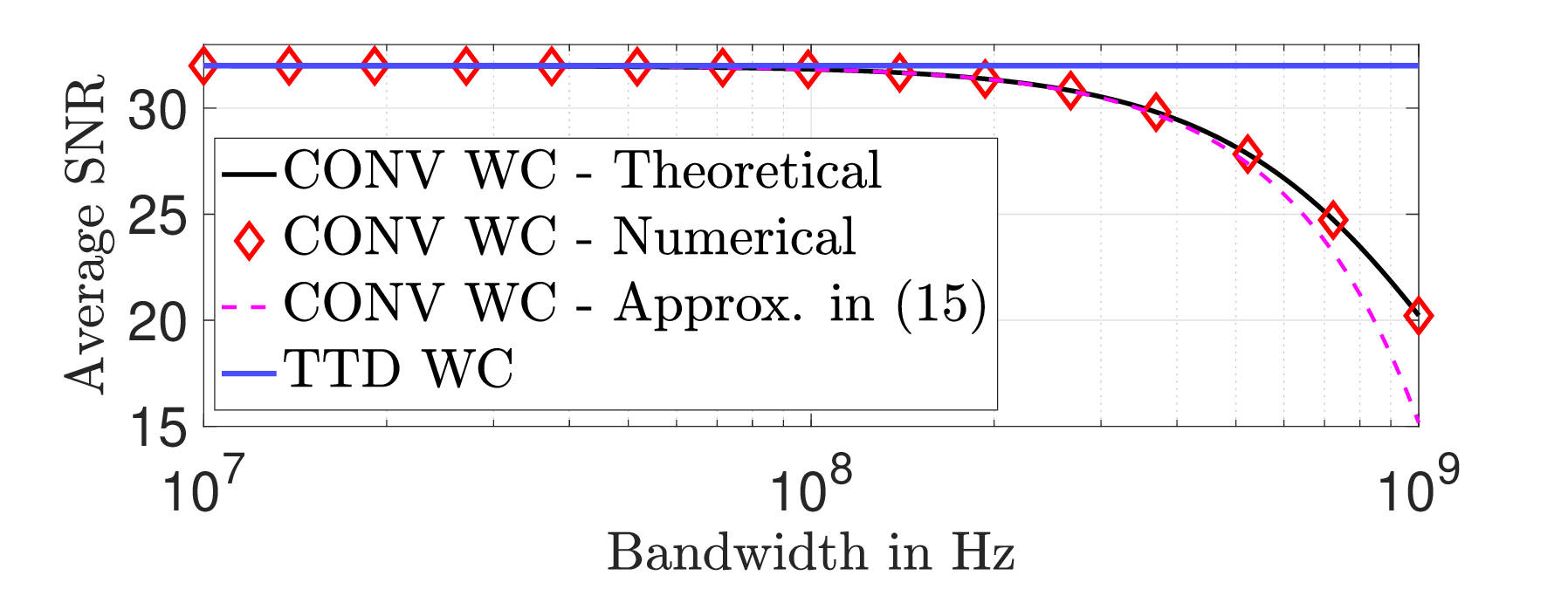}
        \caption*{(b) Average received SNR vs bandwidth of weakly coupled arrays}
        \label{fig:avg_valid}  
    \end{minipage}
    
    \caption{ (a) Instantaneous, and (b) Average received SNR of weakly coupled arrays for $N = 32$, and centered at $10$ GHz.}
    \label{fig:wc_snr_with_squint}
\end{figure}
Fig.~\ref{fig:wc_snr_with_squint} shows the instantaneous and average received SNRs for conventional weakly coupled arrays.
In Fig.~\ref{fig:wc_snr_with_squint}(a), our theoretical expression in (\ref{eqn:oupling_free_instantaneous}) is validated against numerical results for a $2$~GHz bandwidth. 
For conventional analog BF, the SNR decreases when the operating frequency deviates from the center frequency, showing beam squint. 
In contrast, TTD BF keeps the SNR constant across the band, eliminating the beam squint of conventional weakly coupled arrays. 
Fig.~\ref{fig:wc_snr_with_squint}(b) compares the theoretical average SNR in (\ref{eqn:thm1}) with numerical averages for different bandwidths. 
As bandwidth increases, a clear decline in the average received SNR is observed due to beam squint. 
The small-bandwidth approximation in (\ref{cor:1}) is also plotted.
This simplified approximation closely matches the theoretical result for bandwidths up to $ \approx2\%$ of $f_c$, with deviations occurring at wider bandwidths.
\subsection{Beam squinting effects of tightly coupled, SW ULAs}
\begin{figure}[t]
    \centering
    \includegraphics[width=0.44\textwidth]{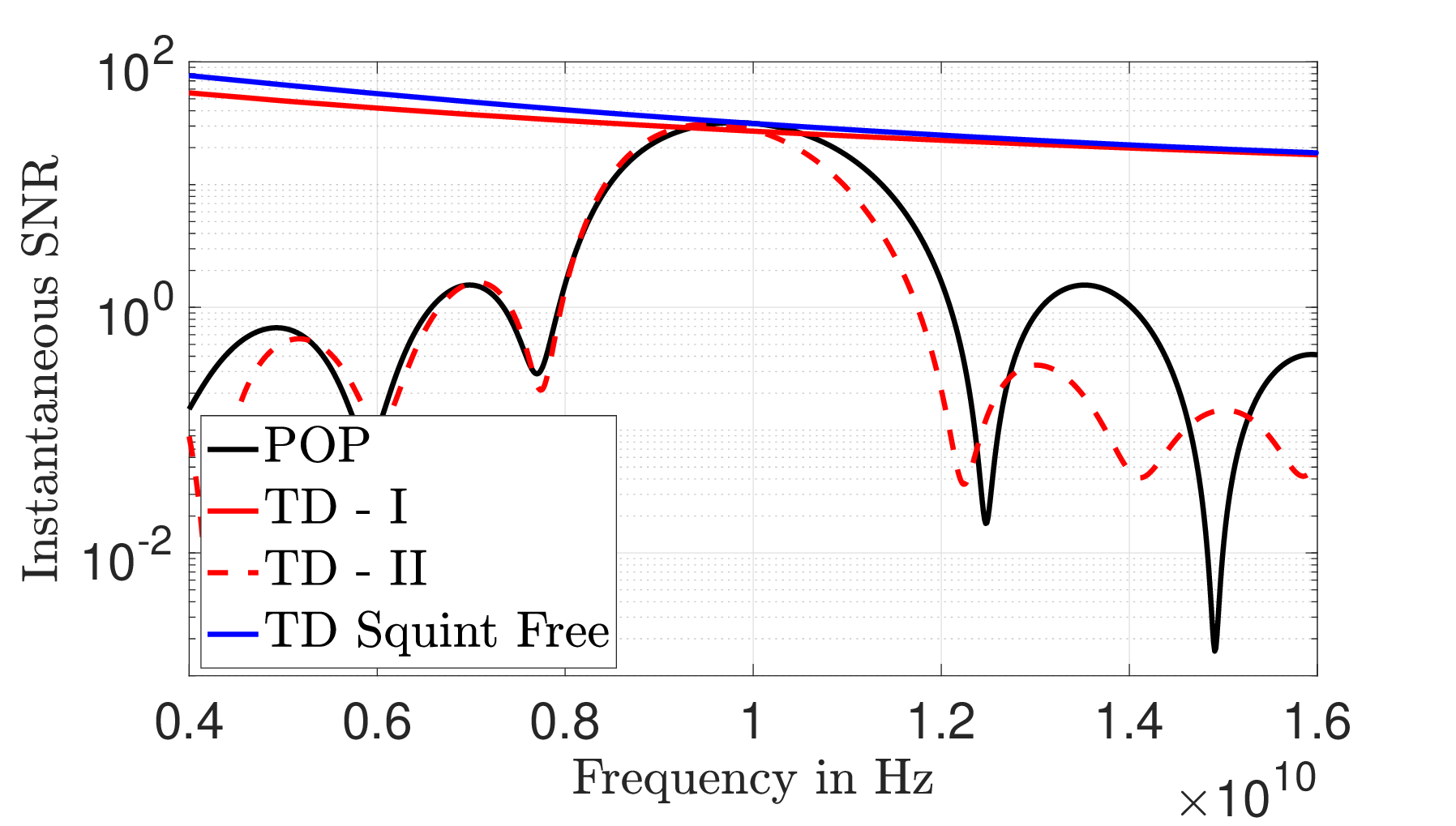}
    \caption{Instantaneous received SNR for tightly coupled SW arrays centered at $10$ GHz, spanning a bandwidth of $12$ GHz.}
    \label{fig:instant_sw}
\end{figure}
Fig.~\ref{fig:instant_sw} shows the instantaneous received SNR for tightly coupled SW arrays centered at $10~$GHz.
It includes results for POP BF as defined in (\ref{eqn:snr_pop}), representing the PC BF approach for SW arrays. The received SNRs for the two TD BF techniques described in Section~\ref{ssec:tc_bf_sw} are also shown. 
The theoretical squint-free TD BF SNR is also included, where TDs are calculated from (\ref{eqn:td_opt}).
All considered BF techniques suffer SNR loss from beam squint, as shown by the lower SNR levels relative to the SNR provided by squint-free TD BF. 
Unlike conventional weakly coupled arrays, both TD BF strategies do not fully suppress beam squinting effects in tightly coupled arrays. 
Notably, the geometrical TD beamformer TD-I significantly outperforms the alternative TD-II, exhibiting near squint-free performance except when extremely wide bandwidths are considered. 
This occurs because TD-II incorporates the exact time delay at the center frequency, accounting for the phase responses of $\mathbf{R_n}(f_c)$, $\mathbf{P}(f_c)$, and $\gamma(f_c)$.
While this ensures exact phase alignment at $f_c$, it introduces an inherent phase misalignment with the channel at other frequencies. In contrast,  the geometric time delay is phase independent and remains useful over a wider frequency range.
Consequently, TD-II achieves superior performance near to the center frequency, whereas TD-I yields better overall wideband performance.

\subsection{Normalized SNR loss due to beam squinting}
\begin{figure}[t] 
    \centering
    \includegraphics[width=0.45\textwidth]{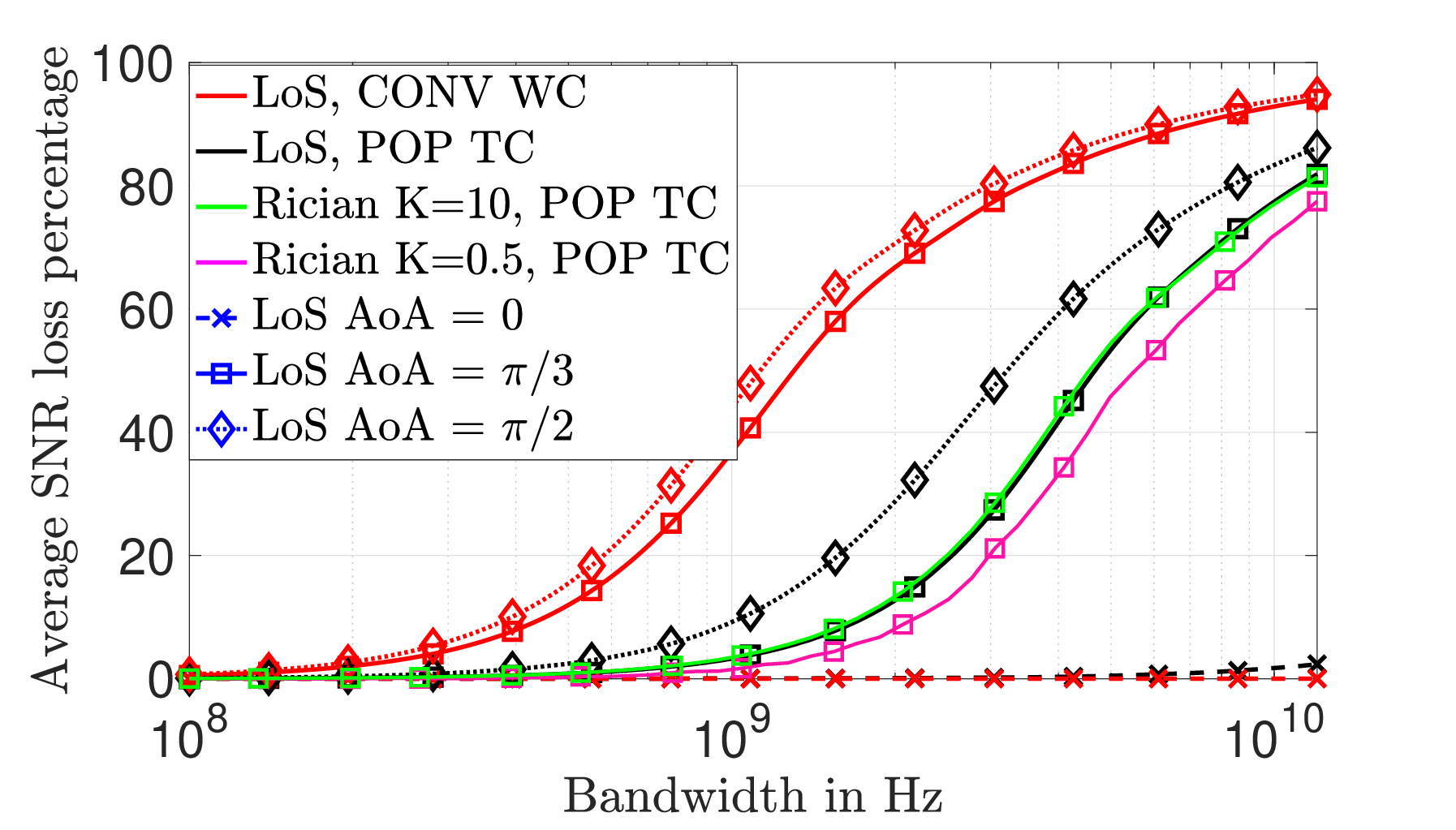}
    \caption{Average SNR loss vs bandwidth for PC BF in conventional weakly coupled arrays and tightly coupled SW arrays, centered at 10~GHz. }
    \label{fig:loss_phased_controlled}
\end{figure}
Fig.~\ref{fig:loss_phased_controlled} shows the normalized SNR loss due to beam squinting, which can be computed using,
\begin{align}
    L_\text{squint} &= \frac{\overline{\text{SNR}}_{\text{TD}}^y (\Delta f) - \overline{\text{SNR}}_x^y (\Delta f)}{\overline{\text{SNR}}_{\text{TD}}^y (\Delta f)} \times 100\%,
\end{align}
where $x$ denotes the BF scheme (CONV, POP) and $y$ the type of ULA (WC, TC). 
The loss is computed relative to TD BF that eliminates squint.
We compare POP BF in a tightly coupled array with strong MC, with PC analog BF in a conventional weakly coupled array under pure LoS.
The loss is plotted versus bandwidths from $100$ MHz to $12$ GHz for various AoAs. 
Strong MC in tightly coupled SW arrays substantially mitigates beam squint compared to conventional weakly coupled arrays. For example, at an AoA of \(\pi/3\) with respect to the broadside, PC BF in weakly coupled arrays incurs a 50\% loss in average SNR due to squint at a bandwidth of \(\approx1.3\) GHz, whereas tightly coupled SW arrays reach the same loss when the bandwidth is \(\approx4.7\) GHz. 
This reduction in beam squint can be attributed to the stronger LoS channel similarity over the band induced by high MC. A detailed investigation of this channel similarity is not included here due to space constraints.
The maximum squint loss occurs at endfire for both array types, but SW arrays exhibit far lower loss at endfire than weakly coupled arrays at \(\phi=\pi/3\). 
This finding is significant since tightly coupled SW arrays provide large SNR gains at endfire\cite{Bandara2025} while still maintaining low squinting losses under phase-controlled POP BF. 
Under phase-controlled POP BF, tightly coupled SW arrays show a small squint loss at broadside when the bandwidth is very large
due to frequency-dependent phase mismatches in the steering vector of the tightly coupled array, unlike weakly coupled arrays that are squint-free at broadside. 
For completeness, we also include two additional curves in Fig.~\ref{fig:loss_phased_controlled} for a multipath scenario with tightly coupled arrays. Rician fading channels from \cite{sachitha2024wcnc} are used with K-factors of 0.5 and 10, representing a substantially scattered and a LoS-dominant environment, respectively. The SNR is computed using the phase of a statistical beamformer \cite{goldsmith} designed at the center frequency, while the squint-free reference SNR is computed in the same way at each subcarrier frequency. As observed, when the LoS component is dominant $(K=10)$, the squinting loss closely resembles that of the pure LoS case. However, as the scattered components become more significant $(K=0.5)$, the phase-controlled statistical beamformer incurs reduced squinting loss compared to the pure LoS environment. This is due to the fact that squinting is less sensitive to multipath channels as the beamformer is targeting the LoS rather than the scattered component.

\section{Conclusion}\label{Sec:conclusion}
In this work, we investigated beam squinting effects of tightly coupled SW arrays using a circuit-theoretic framework that captures MC effects. 
Closed-form expressions for the average received SNR under PC BF in conventional, half-wavelength spaced ULAs were derived and validated. 
We showed that TD BF cannot completely eliminate squint in tightly coupled arrays with a single set of delay units, due to frequency-dependent coupling. 
Numerical analysis demonstrated that strong MC in tightly coupled arrays reduces the loss due to beam squinting for PC BF.
These results indicate that PC beamformers can support wider bandwidths in SW systems, thereby reducing channel knowledge requirements and simplifying implementation.
Future work will explore numerical optimization of TD BF.

\bibliographystyle{IEEEtran}
\bibliography{refference}

@IEEEtranBSTCTL{IEEEexample:BSTcontrol,
   CTLuse_forced_etal       = "yes",
   CTLmax_names_forced_etal = "1",
   CTLnames_show_etal       = "1"
 }

@ARTICLE{super_wideband,
  author={Akrout, Mohamed and Shyianov, Volodymyr and Bellili, Faouzi and Mezghani, Amine and Heath, Robert W.},
  journal={IEEE J. Sel. Areas Commun.}, 
  title={{Super-Wideband Massive MIMO}}, 
  year={2023},
  volume={41},
  number={8},
  pages={2414-2430},
  doi={10.1109/JSAC.2023.3288269}}

@ARTICLE{Ivrlac_main,
  author={Ivrlac, Michel T. and Nossek, Josef A.},
  journal={IEEE Trans. Circuits Syst. I, Reg. Pap.}, 
  title={{Toward a Circuit Theory of Communication}}, 
  year={2010},
  volume={57},
  number={7},
  pages={1663-1683},
  doi={10.1109/TCSI.2010.2043994}}

@ARTICLE{damico2023holographic,
  author={D’Amico, Antonio Alberto and Sanguinetti, Luca},
  journal={IEEE Trans. Wireless Commun.}, 
  title={{Holographic MIMO Communications: What Is the Benefit of Closely Spaced Antennas?}}, 
  year={2024},
  volume={},
  number={},
  pages={},
  doi={10.1109/TWC.2024.3405199}}

@ARTICLE{warnick2009,
  author={Warnick, Karl F. and Woestenburg, Bert and Belostotski, Leonid and Russer, Peter},
  journal={IEEE Trans. Antennas Propag.}, 
  title={{Minimizing the Noise Penalty Due to Mutual Coupling for a Receiving Array}}, 
  year={2009},
  volume={57},
  number={6},
  pages={1634-1644},
  doi={10.1109/TAP.2009.2019898}}

@ARTICLE{kahn1965cms,
  author={Kahn, W.K. and Kurss, H.},
  journal={IEEE Trans. Antennas Propag.}, 
  title={{Minimum-Scattering Antennas}}, 
  year={1965},
  volume={13},
  number={5},
  pages={671-675},
  doi={10.1109/TAP.1965.1138529}}

@ARTICLE{shyianov2022,
  author={Shyianov, Volodymyr and Akrout, Mohamed and Bellili, Faouzi and Mezghani, Amine and Heath, Robert W.},
  journal={IEEE Trans. Commun.}, 
  title={{Achievable Rate With Antenna Size Constraint: Shannon Meets Chu and Bode}}, 
  year={2022},
  volume={70},
  number={3},
  pages={2010-2024},
  doi={10.1109/TCOMM.2021.3099842}}

@article{chu1948,
  title={{Physical Limitations of Omni-directional Antennas}},
  author={Chu, Lan Jen},
  journal={J. Appl. Phys.},
  volume={19},
  number={12},
  pages={1163--1175},
  year={1948},
  publisher={American Institute of Physics}
}

@ARTICLE{mutual_coupling_2018,
  author={Chen, Xiaoming and Zhang, Shuai and Li, Qinlong},
  journal={IEEE Access}, 
  title={{A Review of Mutual Coupling in MIMO Systems}}, 
  year={2018},
  volume={6},
  number={},
  pages={24706-24719},
  doi={10.1109/ACCESS.2018.2830653}}

@phdthesis{connected_arrays,
title = {{Connected Array Antennas : Analysis and Design}},
author = "D. Cavallo",
year = "2011",
doi = "10.6100/IR719461",
language = "English",
isbn = "978-94-6191-035-6",
publisher = "Technische Universiteit Eindhoven",
type = "Phd Thesis 1 (Research TU/e / Graduation TU/e)",
school = "Electrical Engineering",
}

@INPROCEEDINGS{munk2006connected,
       author = {{Munk}, B.~A.},
        title = {{A Wide Band Low Profile Array of End Loaded Dipoles with Dielectric Slab Compensation}},
    booktitle = {EuCAP 2006},
         year = 2006,
       editor = {{Lacoste}, H. and {Ouwehand}, L.},
       series = {ESA Special Publication},
       volume = {626},
        month = oct,
          eid = {9},
        pages = {9},
       adsurl = {https://ui.adsabs.harvard.edu/abs/2006ESASP.626E...9M}
}

@article{Bandara2025,
  author       = {Bandara, Sachitha C. and Smith, Peter J. and Khordad, Erfan and Evans, Robin and Senanayake, Rajitha},
  title        = {{Super Wideband {MIMO} Systems: Channel Modelling and Performance Analysis}},
  journal      = {TechRxiv},
  year         = {2025},
  month        = sep,
  doi          = {10.36227/techrxiv.175832490.02619360/v1},
  note         = {Preprint},
}

@INPROCEEDINGS{sachitha2024wcnc,
  author={Bandara, Sachitha C. and Smith, Peter J. and Khordad, Erfan and Evans, Robin and Senanayake, Rajitha},
  booktitle={2025 IEEE Wireless Commun. Netw. Conf. (WCNC)}, 
  title={{Rician Channel Modelling for Super Wideband MIMO Communications}}, 
  year={2025},
  pages={},
  doi={10.1109/WCNC61545.2025.10978826}}

@article{Huo2023,
  author    = {Yue Huo and Xiaoli Lin and Biao Di and Haijun Zhang and Francisco J. L. Hernando and Alan S. Tan and Shahid Mumtaz and {\"O}zg{\"u}r T. Demir and Kai Chen-Hu},
  title     = {{Technology Trends for Massive {MIMO} towards 6{G}}},
  journal   = {Sensors},
  year      = {2023},
  volume    = {23},
  number    = {13},
  pages     = {6062},
  doi       = {10.3390/s23136062},
  pmid      = {37447911},
  pmcid     = {PMC10347082},
  month     = {June}
}

@ARTICLE{mimo_emil,
  author={Björnson, Emil and Eldar, Yonina C. and Larsson, Erik G. and Lozano, Angel and Poor, H. Vincent},
  journal={IEEE Signal Process. Mag.}, 
  title={{Twenty-Five Years of Signal Processing Advances for Multiantenna Communications: From Theory to Mainstream Technology}}, 
  year={2023},
  volume={40},
  number={4},
  pages={107-117},
  doi={10.1109/MSP.2023.3261505}}

@INPROCEEDINGS{cai2016,
  author={Cai, Mingming and Gao, Kang and Nie, Ding and Hochwald, Bertrand and Laneman, J. Nicholas and Huang, Huang and Liu, Kunpeng},
  booktitle={2016 IEEE Glob. Commun. Conf.}, 
  title={{Effect of Wideband Beam Squint on Codebook Design in Phased-Array Wireless Systems}}, 
  year={2016},
  volume={},
  number={},
  pages={1-6},
  doi={10.1109/GLOCOM.2016.7841766}}

@INPROCEEDINGS{longbrake2012ttd,
  author={Longbrake, Matt},
  booktitle={2012 IEEE Natl. Aerosp. Electron. Conf. (NAECON)}, 
  title={{True Time-Delay Beamsteering for Radar}}, 
  year={2012},
  volume={},
  number={},
  pages={246-249},
  doi={10.1109/NAECON.2012.6531062}}

@article{alekseev2017phase,
  author = {V.I. Alekseev and A.N. Anufriev and V.P. Meshchanov and others},
  title = {{New Structure of Ultrawideband Fixed Phase Shifters Based on Stepped Coupled Transmission Lines with Stubs}},
  journal = {J. Commun. Technol. Electron.},
  volume = {62},
  number = {},
  pages = {535--541},
  year = {2017},
  doi = {10.1134/S1064226917050011},
}

@INPROCEEDINGS{garakoui2011phase,
  author={Garakoui, Seyed Kasra and Klumperink, Eric A. M. and Nauta, Bram and van Vliet, Frank E.},
  booktitle={2011 41st Eur. Microw. Conf. (EuMC)}, 
  title={{Phased-array Antenna Beam Squinting Related to Frequency Dependency of Delay Circuits}}, 
  year={2011},
  volume={},
  number={},
  pages={1304-1307},
  doi={10.23919/EuMC.2011.6101846}}

@article{koch_ttd,
title={{Timed Array Architectures and Integrated True-Time Delay Elements for Wideband Millimeter-Wave Antenna Arrays}}, 
volume={17}, 
DOI={10.1017/S1759078724001053}, 
number={2}, 
journal={Int. J. Microw. Wirel. Technol.}, 
author={Koch, Manuel and Schönhärl, Stefan and Breun, Sascha and Fischer, Georg and Weigel, Robert}, 
year={2025}, 
pages={190–201}}

@ARTICLE{chou2023mc_reduction,
  author={Chou, Hsi-Tseng and Chang, Chen-Yi and Torrungrueng, Danai},
  journal={IEEE Access}, 
  title={{Minimization of Mutual Coupling Interferences Between Nearby Antenna Arrays to Retain Their Beam Steering Functionality}}, 
  year={2023},
  volume={11},
  number={},
  pages={87430-87441},
  doi={10.1109/ACCESS.2023.3305253}}

@INPROCEEDINGS{abdalrazak2024mc_reduction,
  author={Abdalrazak, Mariam Q. and Majeed, Asmaa H. and Abd-Alhameed, Raed A. and Khaleel, Sherif A. and Salama, Ahmed Gamal and Mabrouk, Mohamed and Metwally, Ibrahim M and Amar, Ahmed S. I.},
  booktitle={2024 Int. Telecommun. Conf. (ITC-Egypt)}, 
  title={{Investigating the Impact of Mutual Coupling on mmWave UPA with Consideration of Beam Squint}}, 
  year={2024},
  volume={},
  number={},
  pages={667-671},
  doi={10.1109/ITC-Egypt61547.2024.10620519}}

@ARTICLE{goldsmith,
  author={Goldsmith, A. and Jafar, S.A. and Jindal, N. and Vishwanath, S.},
  journal={IEEE J. Sel. Areas Commun.}, 
  title={{Capacity Limits of MIMO Channels}}, 
  year={2003},
  volume={21},
  number={5},
  pages={684-702},
  doi={10.1109/JSAC.2003.810294}}

\begin{appendices}
\section{Proof of Theorem 1}
Using (\ref{eqn:avg_snr_def}) and (\ref{eqn:oupling_free_instantaneous}), the average received SNR for a weakly coupled ULA can be written as,
\begin{align}
    & \overline{\text{SNR}}_{\text{CONV}}^{\text{WC}}(\Delta f) \nonumber \\
&\quad = \frac{1}{\Delta f} \frac{|\gamma|^2 \sigma^2_c P_T }{N \sigma^2_n} \int_{\Delta f} \left|\sum_{i=1}^{N} e^{j2\pi (i-1) \frac{\delta}{c}\sin({\phi})(f-f_c)}\right|^2 df \nonumber \\
& \quad= \frac{1}{\Delta f} \frac{|\gamma|^2 \sigma^2_c P_T }{N \sigma^2_n} \!\sum_{i=1}^{N}\sum_{e=1}^{N} \! \int_{f_c- \frac{\Delta f}{2}}^{f_c+ \frac{\Delta f}{2}}\!\! e^{j2\pi \frac{\delta}{c}(i-e)\sin({\phi})(f-f_c)}df \nonumber \\
&\quad = \frac{|\gamma|^2 \sigma^2_c P_T }{N \sigma^2_n} \!\sum_{i=1}^{N}\sum_{e=1}^{N} \sinc\left( \pi \frac{\delta}{c}\Delta f (i-e) \sin(\phi)\right).
\label{pf:stp1}
\end{align}
Let $\alpha = \pi \frac{\delta}{c}\Delta f \sin(\phi)$. Then (\ref{pf:stp1}) can be written as,
\begin{align}
    & \overline{\text{SNR}}_{\text{CONV}}^{\text{WC}}(\Delta f)  = \frac{|\gamma|^2 \sigma^2_c P_T }{N \sigma^2_n}  \left[\sum_{i=1}^{N}\sinc(\alpha(i-1)) \right.\nonumber \\
    &\quad + \left.\sum_{i=1}^{N} \sinc(\alpha(i-2))+\cdots+\sum_{i=1}^{N} \sinc(\alpha(i-N))\right]\nonumber \\
    &\quad \!=\! \frac{|\gamma|^2 \sigma^2_c P_T }{N \sigma^2_n} \!\left[\sum_{m=1}^{N-1}\!2(N\!-\!m)\sinc(m\alpha) + N\sinc(0)\!\right]\!,
\end{align}
which yields (\ref{eqn:thm1}).
\end{appendices}

\end{document}